\documentclass[journal ]{new-aiaa}
\usepackage[utf8]{inputenc}
\usepackage{textcomp}
\usepackage{amsmath}

\usepackage{amsthm}
\newtheorem{theorem}{Theorem}

\usepackage{graphicx}
\usepackage[version=4]{mhchem}
\usepackage{siunitx}
\usepackage{longtable,tabularx}
\usepackage{bm}
\title{The Maximum Initial Mass}

\author{Dario Izzo\footnote{Scientific Coordinator, ESA's Advanced Concepts Team.} and Giacomo Acciarini\footnote{Research Fellow, ESA's Advanced Concepts Team.}}
\affil{European Space Agency, Advanced Concepts Team, ESTEC, Noordwijk, The Netherlands}

\begin{document}

\maketitle

\section*{Introduction}
\lettrine{L}{ow}-thrust optimal control problems are most often formulated either as minimum-propellant (or, equivalently, maximum-final-mass) or minimum-time transfers, and these two problem classes underpin the majority of contemporary work in astrodynamics and spacecraft guidance. In interplanetary mission design, time- and fuel-optimal low-thrust problems are routinely employed not only to construct feasible flight trajectories, but also as building blocks for exploring the global structure of the transfer landscape via homotopy methods~\cite{zhang2023solution, pan2018new, pan2020bounding, wang2024thrust}, parametric sweeps, and reachability studies~\cite{acciarini2026reachability, zhang2026, bowerfind2023rapid}. Across this broad literature, time and propellant emerge as the canonical performance indices around which methods, theory, and intuition have been built.

They drive homotopy-based algorithms that connect simple reference solutions to complex, high-fidelity transfers; they parameterize reachability analyses; and they provide training data for surrogate models that emulate optimal guidance policies at negligible online cost. Minimum-time low-thrust problems, in particular, have been the natural backbone for the construction of probability-one homotopy schemes that guarantee convergence to optimal solutions under broad conditions~\cite{pan2018new}, and they have inspired a variety of continuation strategies in thrust, boundary conditions, and complexity in the dynamics~\cite{wang2024thrust, zhang2023solution}.

Despite the pervasive role of low-thrust analysis in preliminary mission design, the structurally related maximum-initial-mass problem has instead received comparatively little explicit attention. Informally, this problem seeks, for a fixed transfer geometry and final state, the largest initial mass from which a low-thrust spacecraft can accomplish the transfer subject to its dynamical and operational constraints. Conceptually, it inverts the usual “given initial mass, minimize time/propellant” viewpoint and instead asks “given time of flight, maximise initial mass”. 
During the Global Trajectory Optimization Competitions (GTOCs), this perspective started to emerge: in one edition, the team from JPL introduced and exploited an approximation to the maximum-initial-mass solution, valid for sufficiently short arcs, that later became known and systematically analyzed under the acronym MIMA~\cite{hennes2016fast}. That approximation proved instrumental in designing a highly complex coordinated multi-rendezvous mission, and it suggests that the underlying optimal-control formulation could also be valuable more generally for mission design

Subsequent work, including contributions by the present authors, has begun to leverage this idea numerically, even if the problem itself has rarely been formalized and studied on its own terms.
Neural surrogate models of the maximum-initial-mass map have been constructed to rapidly assess the feasibility and performance of large sets of asteroid mining scenarios~\cite{acciarini2024computing,izzo2025asteroid}, while related constructs have been proposed for solar sailing applications and low-thrust reachability envelopes~\cite{acciarini2026reachability}. 
In these studies, the maximum-initial-mass quantity appears as a natural scalar summary of the reachable set for a specified transfer, encoding in a single number the combined effect of thrust, specific impulse, and boundary conditions. Yet, its optimal-control structure, its Hamiltonian, costate behavior, and relation to the classical time-optimal problem, has so far been exposed mostly through numerical evidence rather than a dedicated analytical treatment.

This work argues that the maximum-initial-mass problem deserves to be elevated to a first-class optimal control formulation in low-thrust trajectory design. Building on insights that originated from competitive trajectory design at the GTOC and were later used in the context of preliminary mission analysis, the paper formally introduces the maximum-initial-mass optimal control problem, characterizes its relationship to the more familiar minimum-time low-thrust problem, and highlights the specific advantages that follow from adopting this formulation as a design primitive. As a further result, we use the derivation of the necessary conditions to clarify a common misunderstanding in the indirect formulation of time-optimal problems, specifically regarding the transversality condition. Finally, we show that the maximum-initial-mass problem admits a natural homotopy path to multi-revolution transfers. This path can be exploited to reveal the rich solution landscape that arises in such cases and to provide a fundamentally different perspective from thrust-continuation-based approaches.

By making the maximum-initial-mass formulation explicit and clarifying its links to the established minimum-time framework, as well as its differences and advantages, the paper aims to provide a foundational step toward systematic use of maximum-initial-mass optimal control in low-thrust mission design.

\section*{The maximum initial mass problem}
\noindent
Consider the low-thrust dynamics,
\begin{equation}
\label{eq:dyn}
\dot{\bm{x}} = \bm{f}(\bm{x}) + c_1 \bm{B}(\bm{x}) \frac{u}{m}\hat{\bm{i}},
\qquad
\dot m = -\frac{c_1}{c_2}u,
\end{equation}
where $\bm{x}$ is the state vector, such as Cartesian, modified equinoctial elements, or any other coordinate representation, $u \in [0,1]$ is the throttle, $\hat{\bm{i}}$ is the thrust direction with $\|\hat{\bm{i}}\|=1$, and $c_1$, $c_2$ are constants related to maximum thrust and specific impulse. 
The maximum-initial-mass problem consists in finding the control histories $u(t)$ and $\hat{\bm{i}}(t)$ that transfer the system from $\bm{x}(t_0)=\bm{x}_0$ to $\bm{x}(t_f)=\bm{x_f}$ while maximizing the initial mass $m(t_0)$, with fixed transfer time $t_f-t_0$. 
To investigate its mathematical structure and explain its favorable numerical properties, we begin from the Pontryagin Minimum Principle and derive the corresponding first-order necessary conditions. We formulate the problem in the normal form of the principle, following \cite{longuski2014optimal}, by setting $\lambda_0=1$ (this is the constant that multiplies the objective). This fixes the multiplier normalization and excludes abnormal extremals \cite{longuski2014optimal}. Let us introduce the Meyer-form cost functional, converting the maximization into an equivalent minimization,
\begin{equation}
J = -m(t_0),
\end{equation}
where the Hamiltonian is
\begin{equation}
\mathcal{H}(\bm{x},m,\bm{\lambda},\lambda_m,u,\hat{\bm{i}})
=
\bm{\lambda}^T
\left[
\bm{f}(\bm{x}) + c_1 \bm{B}(\bm{x}) \frac{u}{m}\hat{\bm{i}}
\right]
-\lambda_m \frac{c_1}{c_2}u.
\end{equation}
According to the Pontryagin Maximum Principle in minimum form, the optimal controls satisfy
\begin{equation}
(u^*,\hat{\bm{i}}^*)
=
\operatorname{argmin}_{u \in [0,1],\, \|\hat{\bm{i}}\|=1}
\mathcal{H}(\bm{x},m,\bm{\lambda},\lambda_m,u,\hat{\bm{i}})
\end{equation}
at each instant along the optimal trajectory.
Minimization with respect to the thrust direction gives
\begin{equation}
\label{eq:primer}
\hat{\bm{i}}^*
=
-\frac{\bm{B}(\bm{x})^T\bm{\lambda}}
{\|\bm{B}(\bm{x})^T\bm{\lambda}\|}.
\end{equation}
Substituting this expression into the Hamiltonian yields a reduced Hamiltonian affine in $u$,
\begin{equation}
\mathcal{H}
=
\bm{\lambda}^T\bm{f}(\bm{x})
+
c_1 u
\left(
-\frac{1}{m}\|\bm{B}(\bm{x})^T\bm{\lambda}\|
-
\frac{\lambda_m}{c_2}
\right).
\end{equation}
Therefore, introducing a switching function
\begin{equation}
\label{eq:switching}
S
=
-\frac{1}{m}\|\bm{B}(\bm{x})^T\bm{\lambda}\|
-
\frac{\lambda_m}{c_2},
\end{equation}
the optimal throttle is
\begin{equation}
u^*
=
\begin{cases}
1, & S<0,\\
0, & S>0.
\end{cases}
\end{equation}
To determine the sign of $S$, consider the mass costate equation. Since
$\dot{\lambda}_m = -\partial \mathcal{H}/\partial m$, substitution of the optimal thrust direction gives
\begin{equation}
\dot{\lambda}_m
=
\bm{\lambda}^T
\left[
c_1 \bm{B}(\bm{x}) \frac{u}{m^2}\hat{\bm{i}}^*
\right]
=
-\frac{c_1 u}{m^2}\|\bm{B}(\bm{x})^T\bm{\lambda}\|
\le 0.
\end{equation}
Hence $\lambda_m(t)$ is non-increasing along the optimal trajectory. The transversality condition at the initial time is $\lambda_m(t_0)=-\frac{dJ}{dm}=1 > 0$, while at the final time $\lambda_m(t_f)=0$ and therefore $\lambda_m(t)\ge 0$ for all $t \in [t_0,t_f]$. In particular since $m>0$, $c_2>0$, and $\|\bm{B}(\bm{x})^T\bm{\lambda}\|\ge 0$, it follows from Eq.(\ref{eq:switching}) that in the nondegenerate case, one has in fact $S<0$ and thus, necessarily the optimal throttle is $u^*=1$ and does not exhibit a discontinuous bang-bang structure, making the problem well-suited for numerical solutions. It is indeed the case that both direct and indirect methods converge rapidly to the solution allowing the creation of large databases of optimal trajectories in short times, a property largely utilized in \cite{acciarini2024computing, izzo2025asteroid, acciarini2026reachability}.

Considering indirect methods, a shooting-based method will then guess values for the initial mass and co-states and integrate numerically the augmented dynamics trying to solve for the following shooting function\footnote{The exact details of the shooting function will change depending on the actual boundary manifolds considered for which transversality conditions may apply. Here we show the case of a fixed starting state $\bm{x_0}$ and fixed final state $\bm{x_f}$.}:

\begin{equation}
\label{eq:shooting}
F(\bm\lambda_0, m_0) = [\bm x(t_f)-\bm x_f,  \lambda_m(t_f)]=\bm 0
\end{equation}
which is a well-posed root finding problem, with seven equations and seven unknowns.

\section*{Minimum time problem}
\noindent
To later demonstrate the strong relationship of the maximum-initial-mass problem to the minimum-time problem we now consider the classical minimum-time low-thrust trajectory optimization problem, under the same dynamics of Eq.(\ref{eq:dyn}). 
The minimum-time problem consists of finding the control histories $u(t)$ and $\hat{\bm{i}}(t)$ that transfer the system from $\bm{x}(t_0)=\bm{x}_0$ to $\bm{x}(t_f)=\bm{x}_f(t_f)$ while minimizing the transfer duration $t_f-t_0$, with fixed initial mass $m(t_0)=m_0$. The target state $\bm{x}_f(t_f)$ may be prescribed at a fixed time or, more generally, depend explicitly on $t_f$ and thus be modeling some object to be rendezvous with.
Introducing the cost functional
\begin{equation}
J = t_f - t_0= \int_{t_0}^{t_f} 1 \, dt,
\end{equation}
the Hamiltonian is
\begin{equation}
\mathcal{H}(\bm{x},m,\bm{\lambda},\lambda_m,u,\hat{\bm{i}})
=
\bm{\lambda}^T
\left[
\bm{f}(\bm{x}) + c_1 \bm{B}(\bm{x}) \frac{u}{m}\hat{\bm{i}}
\right]
-\lambda_m \frac{c_1}{c_2}u + 1.
\end{equation}
and the optimal $\hat{\bm{i}}^*$ is identical to that presented in Eq.(\ref{eq:primer}) as well as the optimal throttle $u^*=1$. The augmented equations are also the same since the Hamiltonian has the same partial derivatives w.r.t. state and co-states. 

For indirect methods, a shooting-based method will then guess values for the initial co-states and integrate numerically the augmented dynamics trying to solve for the shooting function:

\begin{equation}
\label{eq:shooting_time}
F(\bm\lambda_0, \lambda_{m_0}, t_f) = [\bm x(t_f)-\bm x_f, \lambda_m(t_f)]=\bm 0\text{.}
\end{equation}
This leads to a root-finding problem with eight unknowns and seven equations, so one additional condition is required to render the shooting formulation well-posed. In the minimum-time indirect formulation, the adjoint variables are only determined up to a nonzero scalar multiplication, so different initial costate vectors that differ only by a scale produce the same control law and the same state trajectory. Because of that scaling invariance, the unknowns in the shooting problem are not unique unless one extra normalization condition is imposed. 
This additional condition is often imposed through a normalization of the terminal Hamiltonian and is frequently presented as a necessary terminal condition in later treatments of minimum-time low-thrust problems~\cite{wang2024thrust,taheri2017co,shen1998time}.
However, Pontryagin’s original treatment of time-optimality gives a one-sided terminal inequality rather than this equality condition~\cite{pontryagin}. 
Accordingly, in the present shooting formulation any gauge choice can be used to remove the nonzero scaling indeterminacy of the adjoint vector; this point is discussed in the following remark.

\paragraph{Remark on the free-time transversality condition.}
For free-time problems with a moving end point, the classical transversality condition
\begin{equation}
\label{eq:classic_transversality}
\mathcal H(t_f)+\bm\lambda(t_f)^T\dot{\bm x}_f(t_f)=0
\end{equation}
is commonly derived under a variation argument for the terminal time when the terminal point is constrained to lie on a moving target \cite{liberzon2012calculus}. For minimum-time problems, however, the optimal transfer duration is attained at the boundary of the feasible transfer-time set, so the applicability of that argument requires careful discussion.
The condition in Eq.(\ref{eq:classic_transversality}) is derived under the assumption that the terminal time is an interior admissible variable, so that two-sided first-order variations with respect to $t_f$ are allowed. In reality, however, for minimum time problems, the admissible terminal time belongs to a half-line $t_f \in [t_{f_{\min}}, \infty)$ and the optimum is trivially attained at its lower boundary.
Hence, the relevant first-order condition is one-sided and of Karush-Kuhn-Tucker (KKT) type \cite{nocedal2006numerical} and reads as
$
\mathcal H(t_f)+\bm\lambda(t_f)^T\dot{\bm x}_f(t_f) < 0.
$
To formalize the above argument we go through a possible derivation of the transversality condition valid for time-free optimal control problems with moving endpoint:
$
\mathcal H(t_f)+\bm\lambda(t_f)^T\dot{\bm x}_f(t_f)=0.
$
We proceed by augmenting the state space with the physical time, introducing an additional state variable $s(t)=t,\quad \dot s = 1,$ and then introducing the normalized time $\tau \in [0,1]$ through $t=t_0+T\tau$, where $T=t_f-t_0$. 
In the minimum-time setting, the cost thus becomes $J=T$, while the augmented dynamics becomes,
$$
\frac{d\bm{x}}{d\tau}
=
T\left(\bm{f}(\bm{x}) + c_1\bm{B}(\bm{x})\frac{u}{m}\hat{\bm{i}}\right),
\qquad
\frac{dm}{d\tau}
=
-T\frac{c_1}{c_2}u,
\qquad
\frac{ds}{d\tau}=T.
$$
The terminal condition $\bm x(t_f)=\bm x_f(t_f)$ is then rewritten in the augmented state space as a terminal fixed manifold condition,
$
\mathcal M_f
=
\left\{
(\bm x,m,s)\;:\;\bm x=\bm x_f(s)
\right\}.
$
Introducing the additional costate $\lambda_s$ associated with $s$, the corresponding augmented Hamiltonian is
$
\mathcal H_a=\mathcal H+\lambda_s,
$
and the corresponding rescaled Hamiltonian is simply $\widetilde{\mathcal H}=T\mathcal H_a$. 
A formal stationarity argument with respect to $T$ would then yield
$
\frac{\partial \widetilde{\mathcal H}}{\partial T}=\mathcal H_a=0.
$
Using the transversality condition at the terminal manifold $\mathcal M$, one has
$
\lambda_s(t_f)=-\bm\lambda(t_f)^T\dot{\bm x}_f(t_f),
$
up to sign convention, so that the previous condition $\mathcal H_a=0$ becomes precisely
$
\mathcal H(t_f)+\bm\lambda(t_f)^T\dot{\bm x}_f(t_f)=0.
$
However, this step is not valid in a minimum-time problem. 
Indeed, after normalization, $T$ is minimized over the admissible set $[T_{\min},\infty)$, and the optimum is attained at the left boundary, namely $T=T_{\min}$, since we are minimizing time. Hence, the minimization with respect to $T$ is a boundary optimum, not an interior one, and the appropriate first-order condition is of KKT type.
Therefore the use of $\mathcal H_a(t_f)=0$ as an independent necessary terminal condition for the associated indirect shooting formulation is not needed and can be substituted by any other gauge condition on the co-states. This observation unlocks, for minimum time problems, the study of alternative gauge conditions to improve the numerical stability of the shooting method employed, or facilitate homotopy approaches. 

The statement that $\mathcal H(t_f)+\bm\lambda(t_f)^T\dot{\bm x}_f(t_f)=0$ is not a necessary condition for time optimal problems may appear to be in tension with some of the modern works utilizing the maximum principle, on the other hand the original work from Pontryagin \cite{pontryagin} is actually clear about it when deriving in Chapter I, (\S 3 \lq\lq The maximum principle\rq\rq\ ) the necessary condition for time optimality as $\mathcal H^*(t_f) \le 0$, and not $\mathcal H^*(t_f) = 0$\footnote{In reporting the original result from Pontryagin, we have here adapted the notation to the one used in this work as well as changed the sign as we are dealing with the minimum principle, not the maximum}. Also in the work \cite{lu2008rapid} Lu and colleagues seem to have noted the issue and write about a \lq\lq common misconception\rq\rq\ in a related discussion. Finally, as noted by Longuski, Guzm{\'a}n and Prussing \cite{longuski2014optimal}, the correct proof of the principle is notoriously difficult and often either avoided by scholars or highly simplified. It is then not surprising the subtle point discussed here, differentiating free-time from minimum-time, eluded practitioners considering also that minimum time problems in any case require an additional gauge condition when solved by shooting and $\mathcal H_a(t_f) = 0$ happens to be a perfectly valid one (just not a necessary one).

The above remark turns out to be useful for proving the following Theorem, a central result of this work.

\begin{theorem}[Minimum-time extremals are maximum-initial-mass extremals and vice versa]
\label{thm:main}
Under the dynamics in Eq.~(\ref{eq:dyn}) let
$
(\bm{x}^*(\cdot),m^*(\cdot),u^*(\cdot),\hat{\bm{i}}^*(\cdot),t_f^*)
$
be an extremal solution of the minimum-time problem from the prescribed initial condition
$
\bm{x}(t_0)=\bm{x}_0,\qquad m(t_0)=m_0,
$
to the target condition
$
\bm{x}(t_f)=\bm{x}_f.
$
Then the restricted trajectory-control pair
$
(\bm{x}^*(\cdot),m^*(\cdot),u^*(\cdot),\hat{\bm{i}}^*(\cdot))
$
is also an extremal of the maximum-initial-mass problem over the fixed time interval $[t_0,t_f^*]$.
Conversely, if
$
(\bm{x}^*(\cdot),m^*(\cdot),u^*(\cdot),\hat{\bm{i}}^*(\cdot))
$
is an extremal of the maximum-initial-mass problem over the fixed time interval $[t_0,t_f^*]$, then, with
$
m_0 = m^*(t_0),
$
the augmented tuple
$
(\bm{x}^*(\cdot),m^*(\cdot),u^*(\cdot),\hat{\bm{i}}^*(\cdot),t_f^*)
$
is an extremal of the corresponding minimum-time problem.
\end{theorem}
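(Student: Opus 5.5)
The plan is to compare the two sets of first-order necessary conditions derived above and show that, once the nonzero scaling freedom of the adjoint is accounted for, they coincide. In both problems the Hamiltonians differ only by the additive constant $+1$. Hence they share the same partial derivatives with respect to $(\bm x,m)$, the same costate equations, the same minimizing thrust direction Eq.~(\ref{eq:primer}), and the same switching function Eq.~(\ref{eq:switching}). In each case an extremal is then fully described by four ingredients: the state/costate pair solving the common augmented dynamics, the control given by Eq.~(\ref{eq:primer}) with $u^*$ determined by the sign of $S$, the boundary conditions $\bm x(t_0)=\bm x_0$, $\bm x(t_f)=\bm x_f$, and the terminal condition $\lambda_m(t_f)=0$ (free final mass). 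The only remaining differences are the following. The maximum-initial-mass problem has the initial-mass transversality condition $\lambda_m(t_0)=1$ together with a free $m_0$. The minimum-time problem has a fixed $m_0$ and a free $t_f$, which carries the terminal Hamiltonian condition.

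For the direct implication, I take a minimum-time extremal with costates $(\bm\lambda,\lambda_m)$ on $[t_0,t_f^*]$. Its costate $\lambda_m$ satisfies $\dot\lambda_m=-\frac{c_1u}{m^2}\|\bm B^T\bm\lambda\|\le 0$ and $\lambda_m(t_f^*)=0$, so $\lambda_m(t_0)\ge 0$. In the nondegenerate case ($u^*=1$ on a set of positive measure with $\bm B^T\bm\lambda\neq 0$) this inequality is strict. I then rescale: $(\tilde{\bm\lambda},\tilde\lambda_m)=(\bm\lambda,\lambda_m)/\lambda_m(t_0)$. The costate equations are linear and homogeneous in the adjoint, so the rescaled pair still satisfies them. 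The thrust direction in Eq.~(\ref{eq:primer}) is invariant under positive scaling, and the sign of $S$ is unchanged because $S$ is homogeneous of degree one in the adjoint. The conditions $\tilde\lambda_m(t_f^*)=0$ and $\tilde\lambda_m(t_0)=1$ now both hold. Since $m_0$ is free in the maximum-initial-mass problem, there is no further condition to check, and the same tuple satisfies the shooting system Eq.~(\ref{eq:shooting}) with normalization $\lambda_0=1$.

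For the converse, I start from a maximum-initial-mass extremal, whose costates satisfy $\lambda_m(t_0)=1$ and $\lambda_m(t_f^*)=0$, and fix $m_0=m^*(t_0)$. The state and costate equations, the control law and the boundary conditions of the minimum-time problem are then satisfied on $[t_0,t_f^*]$, and Eq.~(\ref{eq:shooting_time}) holds. The only possibly missing condition is the free-time transversality condition. Here I invoke the preceding Remark. For minimum-time problems, $\mathcal H(t_f)+\bm\lambda^T\dot{\bm x}_f=0$ is not a necessary condition. Pontryagin's necessary condition is only the one-sided $\mathcal H^*(t_f)\le 0$ (with the sign conventions used here), and the remaining scale freedom is a gauge choice. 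In the minimum-time Hamiltonian the $+1$ term can be absorbed by scaling the multiplier $\lambda_0$. I will therefore choose a positive scale factor $k$ for the adjoint (equivalently, for $\lambda_0$) so that the required inequality holds. A normalization achieving equality can also be chosen, so that even the stricter convention used in the literature is met. This is possible because the Hamiltonian is constant along the fixed-endpoint autonomous trajectory and scales linearly in $k$.

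The main obstacle is this last step, together with the degenerate cases. I need to verify that the scaling can actually realize the sign required by the one-sided condition, which amounts to showing that $\bm\lambda^T\dot{\bm x}+\ldots$ evaluated at $t_f$ has the correct sign for the maximum-initial-mass extremal. I also need to exclude $\lambda_m(t_0)=0$ in the direct implication, since that would mean $\bm B^T\bm\lambda\equiv 0$ and hence an abnormal extremal. My first attempt would use the constancy of $\mathcal H$ along the extremal and the fact that $S<0$ gives $\bm\lambda^T\dot{\bm x}$ a definite contribution from the thrust term. Failing that, I would state the degenerate cases as explicit nondegeneracy hypotheses, consistent with the "nondegenerate case" assumption made earlier in the paper.
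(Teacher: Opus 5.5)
\paragraph{Assessment.}
Your direct implication is the paper's argument: rescale the minimum-time adjoint so that $\lambda_m(t_0)=1$. Insisting on a \emph{positive} factor and on $\lambda_m(t_0)>0$ sharpens the paper's ``nonzero'' scaling. The converse, however, has a genuine gap at exactly the step you flag, and neither of your proposed repairs can close it.

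Let $H_0$ be the maximum-initial-mass Hamiltonian (no $+1$), which is constant along the extremal. Rescaling the adjoint by $k>0$ turns the minimum-time Hamiltonian into $kH_0+1$. This is affine in $k$, not linear, so your claim that equality can be reached by scaling is false in general. Whichever terminal condition you target, you need $H_0\le 0$ or $H_0<0$:
\begin{itemize}
\item Pontryagin's $H_0\le 0$, for the Hamiltonian without the cost term;
\item the paper's strict inequality $kH_0+1<0$;
\item the equality $kH_0+1=0$.
\end{itemize}
That sign is unchanged by every positive rescaling.

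Nor is the sign fixed by the extremal conditions. $S<0$ only makes the thrust contribution $c_1uS$ negative, while $\bm\lambda^T\bm f$ can have either sign. In fact, the standard fixed-final-time sensitivity relation $dJ^*/dT=\mathcal H(t_f)$, with $J^*=-m_0^*(T)$ and the normalization $\lambda_m(t_0)=1$, gives $H_0=-\,dm_0^*/dT$. So $H_0>0$ exactly where the maximal initial mass decreases with transfer time.

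This does happen for fixed-endpoint targets. Take $\dot r=v_0+v$, $\dot v=c_1u\sigma/m$ with $\sigma\in\{-1,1\}$, a rest-to-rest transfer from $r=0$ to $r=D$, and neglect mass depletion. Then $m_0^*(T)=c_1T^2/[4(v_0T-D)]$, which decreases for $D/v_0<T<2D/v_0$. At such an extremal, for $T'<T$ close to $T$ one has $m_0^*(T')>m_0=m_0^*(T)$. A lighter vehicle can fly the heavier vehicle's trajectory with its throttle scaled by $m_0/m_0^*(T')<1$, so mass $m_0$ reaches $\bm x_f$ strictly sooner. Pontryagin's inequality therefore truly fails there. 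Since $H_0>0$ is an open condition, not a degeneracy, adding nondegeneracy hypotheses does not help.

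\paragraph{How the paper avoids this.}
Following its Remark, the paper's proof includes no terminal Hamiltonian condition of any kind, neither the equality nor the inequality, in what it calls a minimum-time extremal. It only checks the conditions listed in Table~\ref{tab:mt_mim}: state and adjoint equations, Hamiltonian minimization, $\bm x(t_0)=\bm x_0$, $\bm x(t_f)=\bm x_f$ and $\lambda_m(t_f)=0$. These coincide term by term once $m_0=m^*(t_0)$ is fixed.

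To reproduce that proof, state this convention explicitly and drop the scaling step in the converse entirely. If instead you want Pontryagin's inequality to be part of the definition of a minimum-time extremal, the converse needs the extra hypothesis $H_0<0$, equivalently $dm_0^*/dT>0$ (that is, $\partial\mathcal M/\partial t_f>0$ in the paper's notation). This hypothesis cannot be derived from $S<0$ or from the constancy of $\mathcal H$.
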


\begin{proof}
Since $(\bm{x}^*(\cdot),m^*(\cdot),u^*(\cdot),\hat{\bm{i}}^*(\cdot),t_f^*)$ is an extremal solution of the minimum-time problem, it satisfies the state equations, the adjoint equations, the minimum condition on the Hamiltonian, and the terminal constraints (see Table \ref{tab:mt_mim}, left column). In particular, since the final mass is free in the minimum-time formulation, the transversality condition gives
$
\lambda_m(t_f)=0.
$
Now as observed, the co-state vector of the minimum-time problem is defined only up to multiplication by a nonzero constant, and this scaling does not alter the associated state trajectory nor the control law obtained from the minimum condition. Therefore one may choose an equivalent normalization such that
$
\lambda_m(t_0)=1.
$
With this choice, the same trajectory-control pair satisfies the Pontryagin necessary conditions for the maximum-initial-mass problem on the fixed interval $[t_0,t_f^*]$: the dynamics are unchanged, the Hamiltonian minimization with respect to $u$ and $\hat{\bm{i}}$ is unchanged, the final time is fixed, and the free-initial-mass transversality condition is precisely
$
\lambda_m(t_0)=1.
$
Hence the same trajectory-control pair defines an extremal also for the maximum-initial-mass problem. 

Conversely, let
$
(\bm{x}^*(\cdot),m^*(\cdot),u^*(\cdot),\hat{\bm{i}}^*(\cdot))
$
be an extremal of the maximum-initial-mass problem over the fixed time interval $[t_0,t_f^*]$. Then there exists a corresponding set of adjoint variables such that the state equations, the adjoint equations, the minimum condition on the Hamiltonian, and the terminal transversality condition
$
\lambda_m(t_f^*)=0
$
are satisfied (see Table \ref{tab:alltraj}, right column). Fixing the initial mass to
$
m_0=m^*(t_0),
$
the same trajectory-control pair, together with the same terminal time $t_f^*$, satisfies the necessary conditions for the corresponding minimum-time problem: the dynamics are unchanged, the adjoint equations are unchanged, the Hamiltonian minimization with respect to $u$ and $\hat{\bm{i}}$ is unchanged, the initial mass is now prescribed, and the final mass remains free, yielding again
$
\lambda_m(t_f^*)=0.
$
Therefore
$
(\bm{x}^*(\cdot),m^*(\cdot),u^*(\cdot),\hat{\bm{i}}^*(\cdot),t_f^*)
$
is an extremal of the corresponding minimum-time problem.
\end{proof}

Note how, in the previous proof, a terminal condition of the form 
\(\mathcal{H}_a(t_f) = 0\) does not appear and is never enforced. 
Imposing \(\mathcal{H}_a(t_f) = 0\) for the minimum-time problem would, in general, select a different scaling of the adjoint variables and thus lead to an initial value of \(\lambda_m\) different from \(1\), thereby invalidating the proof. 
Since, as discussed above, the condition \(\mathcal{H}_a(t_f) = 0\) does not apply to minimum-time problems, the adjoint vector is defined only up to a nonzero scalar factor. 
We can therefore normalize it by prescribing \(\lambda_m(t_0) = 1\), and this normalization is precisely the one that allows one to conclude the proof on the extremal correspondence.

\begin{table}[hbt!]
\caption{\label{tab:mt_mim} Structural comparison between the Pontryagin conditions for minimum-time and maximum-initial-mass problems, once the gauge condition $\lambda_m(t_0)=1$ is chosen for the minimum-time problem.}
\centering
\begin{tabular}{lcc}
\hline
& Minimum final time (initial mass fixed) & Maximum initial mass (final time fixed)\\
\hline
Cost functional
&
$\displaystyle
J=\int_{t_0}^{t_f} 1\,dt
$
&
$\displaystyle
J=-m(t_0)
$
\\[0.8em]

Hamiltonian
&
$\displaystyle
\mathcal H
=
\bm{\lambda}^T
\left[
\bm{f}(\bm{x}) + c_1 \bm{B}(\bm{x}) \frac{u}{m}\hat{\bm{i}}
\right]
-\lambda_m \frac{c_1}{c_2}u + 1
$
&
$\displaystyle
\mathcal H
=
\bm{\lambda}^T
\left[
\bm{f}(\bm{x}) + c_1 \bm{B}(\bm{x}) \frac{u}{m}\hat{\bm{i}}
\right]
-\lambda_m \frac{c_1}{c_2}u
$
\\[1.2em]

Optimal controls
&
$u^*=1$, $\hat{\bm{i}}^*
=
-\frac{\bm{B}(\bm{x})^T\bm{\lambda}}
{\|\bm{B}(\bm{x})^T\bm{\lambda}\|}$
&
$u^*=1$, $\hat{\bm{i}}^*
=
-\frac{\bm{B}(\bm{x})^T\bm{\lambda}}
{\|\bm{B}(\bm{x})^T\bm{\lambda}\|}$\\[0.4em]

Prescribed initial conditions
&
$\bm{x}(t_0) = \bm{x}_0,\;\; \lambda_m(t_0)=1, \;\; m(t_0) = m_0$
&
$\bm{x}(t_0) = \bm{x}_0,\;\; \lambda_m(t_0)=1$
\\[0.4em]

Prescribed final conditions
&
$\bm{x}(t_f) = \bm{x}_f,\;\; \lambda_m(t_f)=0$
&
$\bm{x}(t_f) = \bm{x}_f,\;\; \lambda_m(t_f)=0$\\[0.4em]

Free variables
&
$\bm{\lambda}(t_0),\;t_f$
&
$\bm{\lambda}(t_0),\;m(t_0)$
\\
\hline
\end{tabular}
\end{table}

\section*{Homotopy and maximum-initial-mass: the multiple revolution, time-optimal case}
Theorem~\ref{thm:main} lays bare the correspondence between the 
maximum-initial-mass and minimum-time problems. 
Table~\ref{tab:mt_mim} collects the relevant conditions dictated by Pontryagin's maximum principle for both formulations and makes their structural similarity explicit. 
In the minimum-time problem, a gauge condition must be imposed to obtain a numerically well-posed shooting problem; in the proof we adopt the normalization \(\lambda_m(t_0)=1\), but the correspondence itself is independent of this particular choice. 
In other words: any minimum-time solution that starts from an initial mass $m_0$ and reaches the target in time $t_f^*$ is also a maximum-initial-mass solution for that same transfer time $t_f^*$ and any maximum initial mass solution $m_0^*$ for the fixed time $t_f$ is also time-optimal for a spacecraft having the initial mass $m_0^*$. 
The maximum-initial-mass and minimum-time formulations are closely related, but they are not the same. The essential distinction lies in the scalar quantity treated as free: the maximum-initial-mass problem fixes the transfer time and frees the initial mass, whereas the minimum-time problem fixes the initial mass and frees the final time. 
Despite this close relationship, the two problems exhibit different numerical properties and provide complementary viewpoints on fundamental low-thrust trajectory problems, including reachability analysis and homotopy techniques. 
The added value of the maximum-initial-mass formulation for low-thrust reachability analysis has been preliminarily discussed and exploited in recent work \cite{acciarini2026reachability}; here, we focus instead on possible new perspectives added to homotopy research by showing how it allows one to construct a smooth and well-behaved homotopy path for multiple-revolution minimum-time problems. 
In these problems, it is well known how as the available thrust decreases, an increasing number of revolutions are needed, and a large number of locally optimal solutions appear. 
A classical way to navigate this increasingly complex solution landscape is the use of homotopy, with thrust continuation along minimum-time solutions being among the most widely used approaches. 
However, such continuation leads to ill-conditioned paths, which must be regularized, for instance by introducing a second layer of continuation~\cite{wang2024thrust}. 
We show, through a simple example, that the maximum-initial-mass problem enables one to define a differentiable and well-conditioned homotopy path leading efficiently to the global time-optimal solution, while also providing a clear view of the overall funnel-shaped structure of the minimum-time multiple-revolution landscape.

\begin{table}[tb]
\centering
\begin{tabular}{lccc}
\hline
Case & Transfer time (hours) & Final mass (kg) & $N_{\mathrm{rev}}$ \\
\hline
S & 308.34   &1329.77 & 14.35 \\
A \cite{caillau20033d, wang2024thrust} & 283.32   & 1343.58 & 14.66 \\
B \cite{wang2024thrust} & 281.97   & 1344.33 & 15.16 \\
C & 283.78   & 1343.33 & 16.16 \\
D  & 284.16  & 1343.12 & 14.84 \\
E \cite{wang2024thrust} & 285.77  & 1342.23 & 15.83 \\
\hline
\end{tabular}
\caption{Locally optimal cases for 3N and 1500kg. \label{tab:alltraj}.}
\end{table}

We consider the benchmark transfer originally studied by Caillau et al.~\cite{caillau20033d} and subsequently revisited by Wang~\cite{wang2024thrust}. A spacecraft of initial wet mass $m = 1500~\mathrm{kg}$ is equipped with a continuous low-thrust propulsion system and is required to perform a multiple-revolution transfer from a geostationary transfer orbit (GTO) to a geostationary orbit (GEO).
The initial orbit is characterized by the Keplerian elements $a_0 = 26\,571\,430~\mathrm{m}$, $e_0 = 0.75$, $i_0 = 7^\circ$, and true anomaly $f_0 = 180^\circ$, while the target orbit satisfies $a_f = 42\,165\,000~\mathrm{m}$, $e_f = 0$, and $i_f = 0$. The thrust magnitude is assumed constant and bounded by $T_{\max} = c_1 = 3~\mathrm{N}$.
The system state $\bm{x} \in \mathbb{R}^6$ is expressed in terms of modified equinoctial elements~\cite{walker1985set}, we denote with $L$ the true longitude. For prescribed values of the final time $t_f$ and terminal number of revolutions $N_{rev} = \frac{L_f}{2\pi}$, the shooting function introduced in Eq.~(\ref{eq:shooting}) implicitly defines the maximum initial mass $m_0$ as a function of these parameters, i.e.,
$$
\mathcal{M} : (t_f, N_{rev}) \mapsto m_0(t_f, N_{rev}).
$$
Under this formulation, the level sets of $\mathcal{M}$, i.e., the iso-$\mathcal{M}$ curves in the $(t_f,  N_{rev})$-plane, correspond to families of extremal trajectories achieving identical initial mass.
We initialize the continuation procedure, in analogy with thrust homotopy approaches, by constructing a low-revolution reference trajectory to serve as a seed solution. Specifically, we select $N_{\mathrm{rev}} = 1.59$ and $t_f = 285.77~\mathrm{h}$, corresponding to a maximum admissible initial mass $\mathcal{M}(t_f, N_{\mathrm{rev}}) = 88.61~\mathrm{kg}$. This value is low because achieving such a small number of revolutions requires a comparatively high acceleration.
Starting from this solution, we perform a continuation on the parameter $N_{\mathrm{rev}}$, progressively increasing the number of revolutions and, consequently, the maximum initial mass. During this phase, the final time $t_f$ is treated as a free variable and is determined as part of the shooting procedure.  
Accordingly, the free-time transversality condition on the Hamiltonian is enforced. This does not contradict the earlier remark regarding the lower bound on the final time. Indeed, the present formulation considers a maximum initial mass problem with free final time.
For the case considered here, the resulting homotopy is observed to be smooth and the continuation path remains well-posed, allowing the solution to be reliably tracked up to the prescribed value of the initial mass.
The final trajectory obtained through this process (denoted by $S$) is characterized by $t_f = 308.34~\mathrm{hours}$ and $N_{\mathrm{rev}} = 14.35$, with $\mathcal{M} = 1500~\mathrm{kg}$.
From this point onward, one can continue along the iso-$\mathcal{M}$ curve. By Theorem~\ref{thm:main}, each point on such a curve corresponds to an extremal trajectory of the original fixed-initial-mass minimum-time problem. The proposed procedure therefore enables the systematic identification of all time-optimal solutions as local minima of $t_f$ along the iso-$\mathcal{M}$ curve, and thus eventually of the global optimum. In Fig.~\ref{fig:main}, we illustrate the smooth structure of the mapping $\mathcal{M}(t_f, N_{\mathrm{rev}})$.
Several iso-$\mathcal{M}$ curves are reported, including in particular the level set corresponding to $\mathcal{M} = 1500~\mathrm{kg}$. Along this curve, multiple extremal trajectories are identified, associated with either local minima or maxima of the final time $t_f$. The trajectories are also visualized in the Figure.
Some of these solutions have been previously reported in the literature~\cite{caillau20033d, wang2024thrust}, obtained via different homotopy strategies. In contrast, the present perspective in terms of $\mathcal{M}$ provides a unified and comprehensive representation of the solution space. A summary of all identified trajectories is given in Table~\ref{tab:alltraj}.
We emphasize that $\mathcal{M}(t_f, N_{\mathrm{rev}})$ defines a smooth surface, differentiable over the domain of interest, which enables continuous and well-posed continuation procedures. Compared to previous studies on this benchmark, the maximum-initial-mass framework not only recovers the global time-optimal solution, but also exposes a broader family of extremal trajectories through continuation. In particular, it makes clear that some solutions previously reported in the literature (in particular the one marked with E) correspond to local maxima of \(t_f\) along the iso-\(\mathcal{M}\) curves, rather than to local minima.

\begin{figure*}[tb]
\centering
\includegraphics[width=1.0\textwidth]{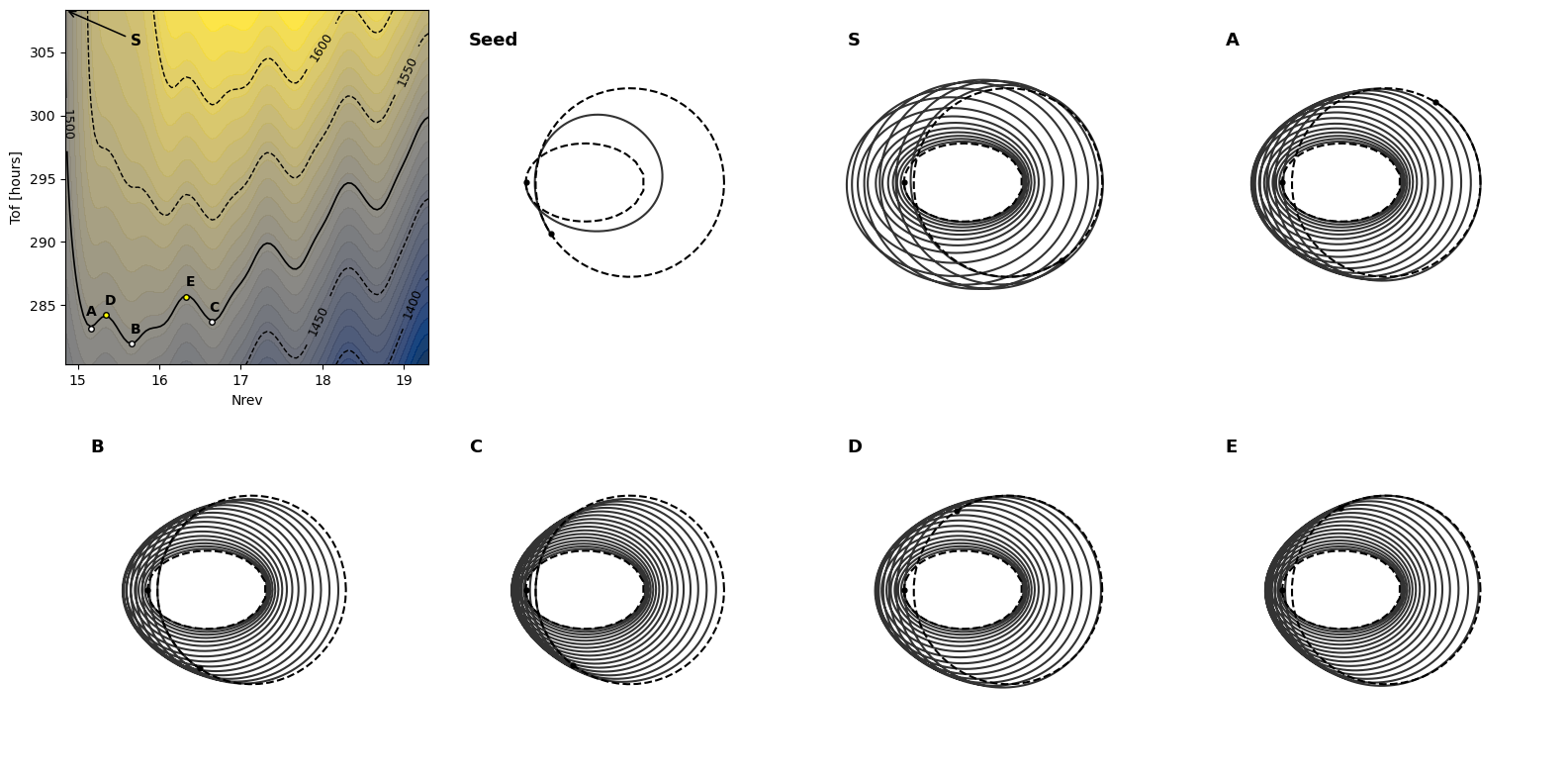}
\caption{Relevant trajectories along the continuation. From S included onwards all trajectories lie on the 1500Kg iso-$\mathcal M$ curve.\label{fig:main}}
\end{figure*}

\section{Conclusion}

We formalized the maximum-initial-mass problem as a standalone optimal control formulation for low-thrust trajectory design and analyzed its structure within Pontryagin’s framework. 
Under standard low-thrust dynamics, the corresponding necessary conditions yield a simple control structure: in the nondegenerate case, the optimal throttle is full-on and the thrust direction follows a primer-vector law, which makes the resulting boundary-value problem well suited to both direct and indirect numerical methods.
We explored the relationship between maximum-initial-mass and minimum-time problems. 
By revisiting the derivation of the terminal Hamiltonian condition, we showed that the commonly used condition on the terminal value of the Hamiltonian in minimum-time problems is a gauge choice rather than a genuine necessary condition. 
Utilizing this remark, we established an extremal correspondence: any minimum-time extremal with fixed initial mass induces, on its optimal time interval, a maximum-initial-mass extremal with fixed transfer time and vice versa.
Despite this correspondence, the maximum-initial-mass and minimum-time formulations remain distinct problems, and they naturally unlock different algorithmic pathways to explore the low-thrust solution space. 
In particular, the homotopy construction presented here shows that continuation in transfer time within the maximum-initial-mass setting leads to a multi-revolution path that is qualitatively different from, and in practice simpler than, thrust continuation on the time-optimal problem. 
A similar perspective applies to reachability analysis, which is classically tied to the time-optimal problem and, through the present results, can be profitably recast in terms of maximum-initial-mass level sets. 
Taken together, these observations support the use of the maximum-initial-mass formulation as a complementary design primitive alongside the minimum-time problem, particularly in preliminary mission design and in data-driven approaches where large ensembles of optimal transfers must be generated and approximated efficiently.

\bibliography{sample}

\end{document}